\journal{Mathematical Biosciences}
\newtheorem{thm}{Proposition}
\newdefinition{rmk}{Remark}
\newtheorem{proposition}[thm]{Proposition}
\begin{document}
\begin{frontmatter}
\title{Disease Persistence in Epidemiological Models: \\ The Interplay between Vaccination and Migration}

\author[label1]{Jackson Burton}
\author[label1]{Lora Billings \corref{cor1}}
\cortext[cor1]{Corresponding author}
\ead{billingsl@mail.montclair.edu}
\author[label2]{Derek A.~T.~Cummings}
\author[label3]{Ira B.~Schwartz} 
\address[label1]{Montclair State University, Department of Mathematical Sciences, Montclair, NJ 07043}
\address[label2]{Johns Hopkins Bloomberg School of Public Health, Department of International Health, Baltimore, MD, 21205}
\address[label3]{US Naval Research Laboratory, Code 6792, Nonlinear System Dynamics Section, Plasma Physics Division, Washington, DC 20375 }

\begin{abstract}
We consider the interplay of vaccination and migration rates on disease persistence in epidemiological systems. We show that short-term and long-term migration can inhibit disease persistence. As a result, we show how migration changes how vaccination rates should be chosen to maintain herd immunity. In a system of coupled SIR models, we analyze how disease eradication depends explicitly on vaccine distribution and migration connectivity. The analysis suggests potentially novel vaccination policies that underscore the importance of optimal placement of finite resources.
\end{abstract}
		
\begin{keyword}
epidemics, migration, vaccination, herd immunity
\end{keyword}

\end{frontmatter}

\section{Introduction}
Countries are increasingly connected by travel and economics. Due to economic disparities and political turmoil, extreme heterogeneity exists in childhood vaccination coverage across the two sides of multiple national boundaries. It has been suggested that the immunization coverage of neighboring countries or those countries well connected by travel can or should be used when crafting national level immunization policy. In the case of hepatitis B, Gay and Edmunds \cite{Gay} argue that it would be four times more cost effective for the United Kingdom to sponsor a vaccination program in Bangladesh than to introduce its own universal program. When indigenous wild poliovirus was eradicated in all but four endemic countries in 2005: India, Nigeria, Pakistan and Afghanistan, it was exported from northern Nigeria and northern India and subsequently caused $>50$ outbreaks and paralyzed $>1500$ children in previously polio-free countries across Asia and Africa \cite{Aylward}. And in 2007, the WHO estimated that there were 197,000 measles deaths, despite the 82\% worldwide vaccination coverage. In countries where measles has been largely eliminated, cases imported from other countries remain an important source of infection \cite{WHO3}. It is clear that a country needs to be concerned with the vaccination rate of a neighboring country as well as its own. 

On another scale, vaccination policies must also take into consideration the subpopulation dynamics within a country. Wilson, et al. \cite{blower} models linked urban and rural epidemics of HIV and discusses how to optimize a limited treatment supply to minimize new infections. Cummings et al. \cite{cummings} uses data to identify a distinct pattern in the periodicity of measles outbreaks in Cameroon before the widespread vaccination efforts of the Measles Initiative. The southern part of Cameroon experienced a significant measles epidemic approximately every three years. In contrast, the three northern provinces contend with annual measles epidemics. In 2000 and 2001, these cyclic outbreaks coincided, exacerbating the situation and causing a much more severe epidemic \cite{cummings}.

Noting that a small contribution of infections from one population to another could drive a new type of epidemic that would not normally occur, we study how migration between populations could change dynamics  and respective herd immunity levels in metapopulation models. We analyze a model of a disease imported between subpopulations of a region by short-term and long-term migration with limited vaccination coverage. Our initial study is based on the analysis of a system of canonical SIR compartmental models. The system allows the rigorous proof of the qualitative affects migration has on herd immunity. The model can be enhanced to include more compartments or seasonal forcing, but most of these systems will require numerical exploration of trends in spatial synchrony and bifurcation analysis, which will be explored in future papers. In this article, we revisit the fundamental ways migration is modeled in metapopulation models and how it fundamentally affects  herd immunity. 

Migration is often treated as a phenomenological input to maintain incidence in a population that might experience local fade-out \cite{ferguson_ADE}.  Long-term migration has been analyzed by Liebovitch and Schwartz \cite{liebovitch}, with a thorough derivation of the linear flux term coupling the patches. This approach also agrees with the classes of models proposed by Sattenspiel and Dietz \cite{Sattenspiel} and Lloyd and Jansen \cite{Lloyd2004}. Keeling and Rohani \cite{Keeling02} investigated the spatial coupling of dynamics exhibited in models using multiple formulations of migration including mass-action coupling and linear flux terms. However, they did not explore the impact of coupling in the presence of vaccination needed to maintain disease free states. Additional analysis of mixed long-term and short-term migration in transport-related disease spread can be found in \cite{Cui2006,Cui2007,Liu2009}. These papers derive the global asymptotic stability of the disease free state for a new disease. Because there is no vaccination, the papers conclude that it is essential to strengthen restrictions of passenger travel as soon as the infectious diseases appear. 

Our paper considers how migration directly affects the vaccination levels needed for herd immunity against a known disease and how that would impact optimum usage of limited vaccination supplies. We investigate the dynamics of models that include mass-action coupling, an assumption that assumes mixing occurs at fast time scales, and linear migration, which is more consistent with mixing occurring at long time scales.  The organization of this paper is as follows:  We introduce a coupled compartmental model in Section \ref{Sec:model} and perform stability analysis of the disease free state as a function of the migration and vaccination rates.  We also consider normal forms of the bifurcations created by the short-term and long-term migration dynamics. Section \ref{Sec:vaccination} describes how vaccination rates should be adjusted with respect to short-term and long-term migration levels to preserve herd immunity. Section \ref{Sec:conclusion} has a summary of our observations and conclusions.

		\section{The Model} \label{Sec:model}

We start with the classic Susceptible, Infected, Recovered (SIR) model. Let $S$, $I$, and $R$ denote the number of people in each of the disease classes for a population of size $N$. Let the parameters $\beta > 0$ denote the contact rate, $\mu > 0$ denote the birth/death rate, and $\kappa > 0$ denote the recovery rate. The vaccination rate, $0 \le v \le 1$, represents the removal of a percentage of the incoming newborn population to recovered.  The standard form for this system is 
                 \begin{eqnarray}
		\frac{dS}{dt} &=& (1-v) \, \mu N - \frac{\beta S I}{N} - \mu S, \nonumber \\ 
		\frac{dI}{dt} &=& \frac{\beta SI}{N}  -\kappa I -\mu I,   \label{eq:SIR_classic} \\  
		\frac{dR}{dt} &=& v \mu N + \kappa I -\mu R. \nonumber
		\end{eqnarray} 
The death rates in the classes balance the births so that the population size $N > 0$ is constant. For a detailed analysis of the single patch formulation of this system, see Hethcote \cite{Hethcote}. 

We now consider two coupled subpopulations where the disease dynamics of each population are described by the SIR model. Let $S_k$, $I_k$, and $R_k$ denote the number of people in each of the disease classes, $\mu_k > 0$ denote the birth/death rates, and $v_k$ denote the vaccination rates of subpopulations $N_k$ for $k=1,2$.  To model long-term movement (linear mixing), let $c_1 \ge 0$ denote the rate of migration from population two to population one and vice versa for the rate $c_2 \ge 0$.  To model short-term movement (mass action mixing), let $0 \le c_3 \le 1 $ be a scaling of the number of infectives from one population who move into the other population for a short time and mix with the susceptibles to produce additional infections.  Because $\beta$ is proportional to the average number of contacts a person can make per unit time, we distribute the contacts for the susceptibles between the infected people by mass action within and outside the population by using the prefactors $c_3$ and $(1-c_3)$ respectively as in Keeling and Rohani \cite{Keeling02}. The coupled two population model is as follows:
		\begin{eqnarray}
		\frac{dS_1}{dt} &=& (1-v_1) \, \mu_1 N_1-\frac{(1-c_3) \,\beta \, S_1I_1}{N_1}-\frac{c_3\,\beta \, S_1 I_2}{N_1}-\mu_1S_1+c_1S_2-c_2 S_1, \nonumber \\ 
		\frac{dI_1}{dt} &=& \frac{(1-c_3)\,\beta \, S_1I_1}{N_1}+\frac{c_3\,\beta \, S_1 I_2}{N_1}  -\kappa I_1 -\mu_1 I_1 +c_1I_2-c_2 I_1, \nonumber \\  
		\frac{dR_1}{dt} &=& v_1\mu_1N_1 + \kappa I_1 -\mu_1R_1+c_1R_2-c_2 R_1, \label{eq:SIR_coupled} \\  
		\frac{dS_2}{dt} &=&  (1-v_2) \, \mu_2 N_2-\frac{(1-c_3)\,\beta \, S_2I_2}{  N_2}-\frac{c_3\,\beta \, S_2 I_1}{ N_2}-\mu_2S_2+c_2 S_1-c_1S_2, \nonumber \\  
		\frac{dI_2}{dt} &=& \frac{(1-c_3)\,\beta \,S_2I_2}{ N_2}+\frac{c_3\,\beta \, S_2 I_1}{ N_2}  -\kappa I_2 -\mu_2 I_2 +c_2 I_1-c_1 I_2, \nonumber \\  
		\frac{dR_2}{dt} &=& v_2\mu_2  N_2 + \kappa I_2 -\mu_2R_2+c_2 R_1-c_1 R_2 .   
		\nonumber
		\end{eqnarray} 

We keep the number of people in the subpopulations constant by letting $\rho = N_2/N_1$ and setting the constraint $c_2 = c_1 \rho$.  This system is overdetermined by the subpopulation constraints, $S_k+I_k+R_k=N_k$ for $k=1,2$, and therefore the analysis omits the variables $R_k$ for $k=1,2$. 

Motivated by the distinct subpopulation dynamics of Cameroon described in Cummings et al. \cite{cummings}, numerical simulations will use parameters based on Cameroon demographics. The values are listed in Table~\ref{table:full_params}.   The subpopulation sizes are totals for the northern and southern regions based on data in \cite{cummings}.  The birth/death rates are averages over the northern and southern regions based on data in \cite{cummings}. The recovery rate is a  parameter that is derived from the biological characteristics of measles.  The contact rate was estimated using the average age of incident measles cases over the period 1998-2006 from passive surveillance data \cite{cummings}.   The specific results here are fairly insensitive to changes to $\beta$.  An SIR model is used here without the exposed class but we expect the inclusion of an exposed class would not substantively change our qualitative results.

\begin{table}
		\caption{Parameter Values for Model Based on Cameroon Data} 
		\centering 
		\begin{tabular}{c | c | c | c} 
		\hline\hline 
		Parameter & Value & Unit & Description\\  
		\hline 
		$N_1$ & 4,451,000 & people & Northern subpopulation size \\ 
		$N_2$ & 10,212,000 & people & Southern subpopulation size \\ 
		$\rho$ & $2.2943$ & none & Ratio of $N_2/N_1$   \\
		$\beta$ & $700 $ & year$^{-1}$ & Contact rate \\
		$\kappa$ & $100$ & year$^{-1}$ & Measles recovery rate  \\
		$\mu_1$ & $.0428$ & year$^{-1}$ & Birth and death rate for $N_1$  \\
		$\mu_2$ & $.0329$ & year$^{-1}$ & Birth and death rate for $N_2$ \\
		\end{tabular}
		\label{table:full_params} 
		\end{table}
		
\subsection{General system analysis}
We start with a general analysis of the system to determine the conditions necessary for the populations to be disease free.		
\begin{thm} System (\ref{eq:SIR_coupled}) has a disease free equilibrium (DFE) and is given by 
\begin{equation}
 (S_1,I_1,S_2,I_2)=(N_1 \hat{S}_1,0,N_2 \hat{S}_2,0 ), 
\end{equation} 
for 
\begin{eqnarray}
		\hat{S}_1 = \frac{(1-v_1)(\mu_1 c_1 + \mu_1 \mu_2) + (1-v_2) \mu_2 c_1 \rho }{  \mu_1 c_1 + \mu_1 \mu_2 + \mu_2 c_1 \rho}, \\
 \hat{S}_2 = \frac{(1-v_1) \mu_1 c_1+(1-v_2)(\mu_1 \mu_2 + \mu_2 c_1 \rho ) }{  \mu_1 c_1 + \mu_1 \mu_2 + \mu_2 c_1 \rho}. 
		\end{eqnarray}
\end{thm}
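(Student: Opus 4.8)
The plan is to impose the defining condition of a disease-free equilibrium, namely $I_1=I_2=0$, and show that this collapses the steady-state problem into a solvable linear system in the two susceptible variables. First I would substitute $I_1=I_2=0$ into the four reduced equations (the $R_k$ equations having been eliminated through the constraints $S_k+I_k+R_k=N_k$). The two infected equations $dI_1/dt$ and $dI_2/dt$ then vanish identically, since every term on their right-hand sides carries a factor of $I_1$ or $I_2$; hence the infected components are automatically consistent with equilibrium and impose no condition on $S_1,S_2$. This is the observation that makes the DFE tractable: the nonlinear mass-action terms drop out entirely.

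Next I would set the surviving susceptible equations $dS_1/dt$ and $dS_2/dt$ to zero. With the infected terms gone these are linear in $S_1$ and $S_2$, and after rearranging they take the matrix form
\[
\begin{pmatrix} \mu_1+c_2 & -c_1 \\ -c_2 & \mu_2+c_1 \end{pmatrix}
\begin{pmatrix} S_1 \\ S_2 \end{pmatrix}
=
\begin{pmatrix} (1-v_1)\mu_1 N_1 \\ (1-v_2)\mu_2 N_2 \end{pmatrix}.
\]
The coefficient matrix has determinant
\[
\Delta=(\mu_1+c_2)(\mu_2+c_1)-c_1 c_2=\mu_1\mu_2+\mu_1 c_1+\mu_2 c_2,
\]
which is strictly positive because all rates are positive; the system therefore has a unique solution, giving both existence and uniqueness of the DFE. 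I would then solve by Cramer's rule for $S_1$ and $S_2$.

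The final step is to normalize by setting $\hat S_k=S_k/N_k$, and this is where the stated constraint $c_2=c_1\rho$ with $\rho=N_2/N_1$ enters. Dividing $S_1$ by $N_1$ produces a factor $N_2/N_1=\rho$ on the cross term, which combines with $c_1$ to form the $c_1\rho$ appearing in $\hat S_1$; dividing $S_2$ by $N_2$ produces a factor $c_2(N_1/N_2)=c_1\rho/\rho=c_1$, which simplifies cleanly to give the $c_1$ term in $\hat S_2$. Substituting $c_2=c_1\rho$ into $\Delta$ likewise turns it into $\mu_1\mu_2+\mu_1 c_1+\mu_2 c_1\rho$, the common denominator in both claimed formulas, and matching numerators completes the verification.

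The computation is essentially mechanical, so there is no genuinely deep obstacle. The only point requiring care is the bookkeeping in the final normalization: one must correctly track how the population-ratio factors $N_2/N_1$ and $N_1/N_2$ interact with the substitution $c_2=c_1\rho$ so that the $N_k$ cancel and the stated (slightly asymmetric) expressions for $\hat S_1$ and $\hat S_2$ emerge.
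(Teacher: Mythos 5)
Your proposal is correct and follows the same (and essentially only natural) route underlying the paper's result: the paper states this proposition without an explicit proof, and your computation---setting $I_1=I_2=0$ so the infected equations vanish identically, solving the resulting linear system in $S_1,S_2$ with determinant $\mu_1\mu_2+\mu_1 c_1+\mu_2 c_2>0$, and normalizing with $c_2=c_1\rho$---is exactly the verification the authors leave implicit. Your bookkeeping of the $N_2/N_1$ and $N_1/N_2$ factors is accurate and reproduces the stated $\hat{S}_1$ and $\hat{S}_2$.
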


Note that the DFE does not depend on the short-term migration parameter $c_3$. Without long-term migration ($c_1=0$), the DFE simplifies to $(S_1,I_1,S_2,I_2)=\left(N_1(1-v_1),0,N_2(1-v_2),0 \right)$, which is the steady state for the uncoupled system. Also, there is no steady state for which the disease dies out in only one of the two subpopulations if $c_1>0$. 
 
The local stability of the DFE can be determined by the eigenvalues of the Jacobian of the system evaluated at the DFE. The resulting eigenvalues are
		\begin{eqnarray}
                  \lambda_1 &=& -\frac{1}{2}\left(c_1+c_1 \rho+\mu_1+\mu_2 +
                                              \sqrt{(c_1+c_1 \rho+\mu_1-\mu_2)^2-4c_1(\mu_1-\mu_2)}\right),\\
                  \lambda_2 &=& -\frac{1}{2}\left(c_1+c_1 \rho+\mu_1+\mu_2 -
                                              \sqrt{(c_1+c_1 \rho+\mu_1-\mu_2)^2-4c_1(\mu_1-\mu_2)}\right),\\
		\lambda_3 &=& -\frac{1}{2}\left(c_1+c_1 \rho+\mu_1+\mu_2+2\kappa - (1-c_3)\,\beta\,(\hat{S}_1+\hat{S}_2) \,  + \sqrt{W} \right), \\
		\lambda_4 &=& -\frac{1}{2}\left(c_1+c_1 \rho+\mu_1+\mu_2+2\kappa - (1-c_3)\,\beta\,(\hat{S}_1+\hat{S}_2) \,  - \sqrt{W} \right), \label{lambda4}
		\end{eqnarray}
		for
\begin{equation}
W = 4\,\beta\,c_3\,c_1 \left( \hat{S}_2+ \rho\,\hat{S}_1\right)  +4 \left(\beta^2 c_3^2 \hat{S}_1  \hat{S}_2+c_1^2 \rho \right) + \left( (1-c_3)\,\beta \left( \hat{S}_1- \hat{S}_2 \right) +c_1-c_1 \rho-\mu_1+\mu_2 \right) ^{2} .
\label{eq:dom_eig_full}
\end{equation}
The DFE is locally stable if the maximum value of the real parts of this set of eigenvalues is negative. 
\begin{thm} The eigenvalue $\lambda_4$ determines the local stability of the DFE. 
\end{thm}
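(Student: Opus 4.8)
The plan is to exploit the block structure of the Jacobian at the DFE and then compare the four eigenvalues, the goal being to show that the DFE is locally asymptotically stable precisely when $\lambda_4<0$. The key structural observation is that every infection term in System~(\ref{eq:SIR_coupled}) is bilinear in a susceptible and an infective variable, so all partial derivatives of $\dot I_1$ and $\dot I_2$ with respect to $S_1,S_2$ vanish at the DFE, where $I_1=I_2=0$. Ordering the linearized variables as $(I_1,I_2,S_1,S_2)$, the Jacobian is therefore block triangular, and its spectrum is the union of the spectra of the $2\times2$ ``infective block'' $B$ (which yields $\lambda_3,\lambda_4$) and the $2\times2$ ``susceptible/migration block'' $A$ (which yields $\lambda_1,\lambda_2$). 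This reduces the problem to two independent $2\times2$ analyses.

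First I would dispose of $\lambda_1,\lambda_2$ by showing they always have negative real part, so that they can never be responsible for a loss of stability. From the given formulas their sum is $-(c_1+c_1\rho+\mu_1+\mu_2)<0$, and a short computation collapses their product to $\mu_1\mu_2+\mu_1 c_1+\mu_2 c_1\rho$, which is strictly positive since $\mu_1,\mu_2>0$ and $c_1,\rho\ge0$; equivalently, $\det A>0$ and $\mathrm{tr}\,A<0$. By the $2\times2$ Routh--Hurwitz criterion both eigenvalues then have negative real part, whether the discriminant $(c_1+c_1\rho+\mu_1-\mu_2)^2-4c_1(\mu_1-\mu_2)$ is nonnegative (two negative real roots) or negative (a complex-conjugate pair with real part $-\tfrac12(c_1+c_1\rho+\mu_1+\mu_2)<0$).

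Next I would compare $\lambda_3$ and $\lambda_4$. The crucial point is that $W\ge0$ for all admissible parameters: in~(\ref{eq:dom_eig_full}) the first summand is nonnegative because $\beta,c_3,c_1,\rho,\hat S_1,\hat S_2\ge0$ (the $\hat S_k$ are nonnegative since $0\le v_k\le1$ and the denominators are positive), the second summand is a sum of nonnegative terms, and the third is a perfect square. Hence $\sqrt{W}$ is real and nonnegative, the infective block has two real eigenvalues, and the elementary identity $\lambda_4-\lambda_3=\sqrt{W}\ge0$ gives $\lambda_4\ge\lambda_3$.

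Finally I would assemble these facts. Since $\mathrm{Re}\,\lambda_1,\mathrm{Re}\,\lambda_2<0$ unconditionally and $\lambda_3\le\lambda_4$, the maximal real part over the four eigenvalues is negative if and only if $\lambda_4<0$; that is, the DFE is locally asymptotically stable exactly when $\lambda_4<0$, so $\lambda_4$ is the eigenvalue governing the stability transition. I expect the only genuinely delicate steps to be confirming the positivity of $\det A$, where the $c_1^2\rho$ contributions must cancel, and verifying that $W$ is manifestly a sum of nonnegative quantities; once these are in hand the conclusion follows from the trace/determinant sign pattern and the identity $\lambda_4-\lambda_3=\sqrt{W}$.
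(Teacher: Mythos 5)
Your proof is correct and follows the same overall skeleton as the paper's: dispose of $\lambda_1,\lambda_2$ unconditionally, establish $W\ge 0$ so that $\lambda_3,\lambda_4$ are real with $\lambda_4-\lambda_3=\sqrt{W}\ge 0$, and conclude that any sign change in the maximal real part must occur through $\lambda_4$. Where you genuinely differ is the handling of $\lambda_1,\lambda_2$. The paper works directly with the radicand $\theta_2=(c_1+c_1\rho+\mu_1-\mu_2)^2-4c_1(\mu_1-\mu_2)$, minimizing it as a quadratic in $c_1$ to get the minimum value $4\rho(\mu_1-\mu_2)^2/(\rho+1)^2$ (hence real roots), and then shows $\theta_1>\sqrt{\theta_2}$ by reducing it to $c_1>-\mu_1\mu_2/(\mu_1+\mu_2\rho)$. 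You instead apply the $2\times 2$ Routh--Hurwitz criterion to the susceptible block: $\mathrm{tr}\,A=-(c_1+c_1\rho+\mu_1+\mu_2)<0$ and $\det A=\mu_1\mu_2+\mu_1c_1+\mu_2c_1\rho>0$ (your determinant is right; the $c_1c_2=c_1^2\rho$ contributions do cancel, since $\det A=(\mu_1+c_1\rho)(\mu_2+c_1)-c_1\cdot c_1\rho$). Your route is slightly more robust: it treats the real and complex-conjugate cases uniformly, so you never need strict positivity of $\theta_2$ --- which in fact degenerates to zero when $\mu_1=\mu_2$, a boundary case where the paper's claimed strict inequality fails (realness still holds there, as $\theta_2$ becomes a perfect square, but your argument sidesteps the issue entirely). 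You also make explicit the block-triangular structure of the Jacobian at the DFE (every infection term is bilinear in a susceptible and an infective, so $\partial\dot I_k/\partial S_j=0$ when $I_1=I_2=0$), which the paper leaves implicit in simply listing the four eigenvalues, and your term-by-term verification that $W$ is a sum of nonnegative quantities is a spelled-out version of the paper's ``by inspection.''
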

\begin{proof}
We can show $\lambda_1$ and $\lambda_2$ are always negative. First, let  
\begin{equation}
		\begin{array}{rl}
		\theta_1&=c_1+c_1 \rho+\mu_1+\mu_2, \\[.1 in]
		 \theta_2&= (c_1+c_1 \rho+\mu_1-\mu_2)^2-4c_1(\mu_1-\mu_2).\\
		\end{array}
		\end{equation}
If we consider $\theta_2$ as a quadratic expression in $c_1$ with leading coefficient $(\rho+1)^2$, it attains an absolute minimum at $d\theta_2/dc_1=0$.  Solving this equation gives $c_1 = (\mu_1-\mu_2)/(\rho+1)^2$.  Substituting this expression into $\theta_2$ to find the absolute minimum gives 
$4\rho(\mu_1-\mu_2)^2/(\rho+1)^2 >0$. Therefore $\theta_2>0$ for all $c_1$, which implies $\lambda_1$ and $\lambda_2$ are real valued. 

Upon inspection we see that $\theta_1>0$, $\theta_1+\sqrt{\theta_2}>0$, and therefore $\lambda_1<0$. For $\lambda_2 <0$,  we require $\theta_1> \sqrt{\theta_2}$. This is equivalent to $c_1 > -\mu_1\mu_2/(\mu_1+ \mu_2\rho)$. Since we assume $c_1 >0$, this is always true. Therefore $\theta_1 - \sqrt{\theta_2}>0$, which implies $\lambda_2 <0$.  

For our parameter assumptions, we see that $W>0$ by inspection. This implies $\lambda_3$ and $\lambda_4$ are real valued and $\lambda_4>\lambda_3$. The only way for the DFE to be unstable is for $\lambda_3>0$ or $\lambda_4>0$.  Because of the ordering, a sign change would have to happen for $\lambda_4$ first. Therefore, $\lambda_4$ determines the stability of the DFE. 
\end{proof}

		\begin{figure}
		\begin{centering}
		\includegraphics[height=65mm]{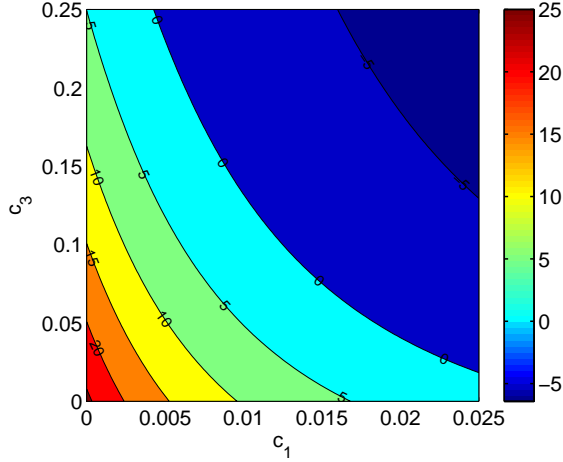}
\caption{ {\label{fig:c1c3} Contour plot of $\lambda_4$ values as a function of $c_1$ and $c_3$. Parameters are given by the values in Table~\ref{table:full_params}, with $v_1=0.82$ and $v_2=0.90$. The DFE is unstable for $\lambda_4>0$, which occurs for smaller values of $c_1$ and $c_3$.  }}
		\end{centering}
		\end{figure}

To quantify how each migration type effects the stability of the DFE, we can monitor the sign of $\lambda_4$ as we vary $c_1$ and $c_3$. As an example, we show a contour plot of $\lambda_4$ in Fig.~\ref{fig:c1c3} using the parameters in Table~\ref{table:full_params}, with $v_1=0.82$ and $v_2=0.90$. When $\lambda_4>0$, the DFE is unstable. From the figure, you can see that as $c_1$ and $c_3$ decrease, $\lambda_4$ increases and the DFE becomes unstable. We now explore the underlying conditions necessary in each subpopulation for which migration can cause the die out or invasion of a disease.

In the absence of migration,  we recover the basic reproductive numbers scaled by vaccination for
each subpopulation for the uncoupled system  \cite{Hethcote}. Specifically
when $c_1=c_3=0$, 
\begin{equation} \label{Eq:basicrep}
\hat{R}_1(v_1) = \frac{\beta(1-v_1)}{\kappa+\mu_1} ~~\mbox{and} ~~ \hat{R}_2(v_2)=\frac{\beta(1-v_2)}{\kappa+\mu_2}
\end{equation}
We omit the arguments for $\hat{R_k}$ for $k=1,2$ from here on, unless otherwise specified. The basic reproductive number is the quantity that defines the threshold between disease absence and persistence, and for the canonical SIR model without vaccination $R_0 = \frac{\beta}{\kappa+\mu}$. We write these expressions as a function of the vaccination rate in the subpopulation noting that the inequalities $\hat{R}_k<1$ for $k=1,2$ implies that the DFE in each subpopulation is locally stable. At $\hat{R}_k=1$ for $k=1,2$, these two expressions also represent transcritical bifurcations for the uncoupled system. As an example, for the parameters used in Fig.~\ref{fig:c1c3}, $\hat{R}_1(0.82)>1$ and the disease would be endemic in $N_1$. Conversely, $\hat{R}_2(0.90)<1$ and the disease would die out in $N_2$. This motivates us to examine the effect migration has on a simple system with a transcritical bifurcation in each component.  

\subsection{Normal Form for Linear Mixing}

To understand how long-term migration directly affects the stability of the
DFE, we rewrite the system as the normal form of a transcritical bifurcation
with linear coupling. Since the SIS model has the same topology near the DFE as
the SIR model, we consider the standard SIS model with births and deaths \cite{Murray} 
\begin{subequations}
\begin{align}
\frac{ds}{dt} &= \mu - \beta s i + \kappa i - \mu s,  \label{eq:SIS:S} \\
\frac{di}{dt} &= \beta s i - \kappa i - \mu i,  \label{eq:SIS:I} 
\end{align}
\end{subequations}
with nondimensional variables representing percentages of the population. Since
$s+i=1$, the system is overdetermined and we need only to solve $di/dt$. 

In the first subpopulation, let $x=i$ and $1-x=s$. Substitute these variables into Eq.~(\ref{eq:SIS:I}) and rescale time by $\beta$. We repeat the process using the variable $y$ to represent the second population. By adding the linear migration terms to these equations, we find
\begin{eqnarray}
\dot{x}&=&r_1x-x^2- \alpha x+ \alpha y,  \label{norm_form_coupling} \\
\dot{y}&=&r_2y-y^2-\alpha y+\alpha x.  \nonumber
\end{eqnarray}
Here, the bifurcation parameter $r_{k}=(\hat{R}_{k}-1)/\hat{R}_{k}$ for $k=1,2$ from Eq.~(\ref{Eq:basicrep}). In addition, we rescaled the long-term migration rate as the parameter $\alpha=c_1/\beta$.

The steady state $(x,y)=(0,0)$ is equivalent to the DFE in the full system in Eq.~(\ref{eq:SIR_coupled}). In the absence of coupling ($\alpha=0$), a transcritical bifurcation will occur in the $x$ system, transferring the stability from $x=0$ to $x=r_1$ at $r_1=0$. The dynamics are similar for $y$, respectively. Linearizing about the steady state $(x,y)=(0,0)$ yields two eigenvalues,
\begin{eqnarray} \label{norm_eigs}
\Lambda_{1}&=&\frac{1}{2}\left( r_1+r_2 -2\alpha +\sqrt{4\alpha^2 + (r_1-r_2)^2}\right), \\
\Lambda_{2}&=&\frac{1}{2}\left( r_1+r_2 -2\alpha -\sqrt{4\alpha^2 + (r_1-r_2)^2}\right). \nonumber
\end{eqnarray} 
The following analysis uses this linearization approach to conclude when long-term migration can change the stability of this steady state.

We start by considering the case of two isolated endemic populations, which have basic reproduction numbers greater than one. We ask if it is possible to stabilize the die out state through the coupling parameter, $\alpha$.  
\begin{thm} \label{thm:twobig}
If $r_1,r_2 > 0$, then the fixed point $(x,y)=(0,0)$ is unstable for all $\alpha  \in [0, \infty)$.
\end{thm}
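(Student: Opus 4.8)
The plan is to show that the larger eigenvalue $\Lambda_1$ in Eq.~(\ref{norm_eigs}) is strictly positive for every $\alpha \ge 0$; a single positive eigenvalue already forces the fixed point to be unstable. Note first that the discriminant $4\alpha^2 + (r_1-r_2)^2$ is a sum of squares, hence nonnegative, so $\Lambda_1$ and $\Lambda_2$ are both real and I only need to track their signs as $\alpha$ ranges over $[0,\infty)$.

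Rather than wrestle directly with the square root, I would pass to the trace--determinant criterion for the $2\times 2$ linearization. The Jacobian at $(x,y)=(0,0)$ is
\[
J=\begin{pmatrix} r_1-\alpha & \alpha \\ \alpha & r_2-\alpha \end{pmatrix},
\]
with $\operatorname{tr} J = r_1+r_2-2\alpha$ and $\det J = r_1 r_2 - \alpha(r_1+r_2)$. For a matrix with real eigenvalues, asymptotic stability of the origin would require both $\operatorname{tr} J<0$ and $\det J>0$, that is, $\alpha>\tfrac{r_1+r_2}{2}$ and simultaneously $\alpha<\tfrac{r_1 r_2}{r_1+r_2}$.

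The crux is then a single inequality between these two thresholds. Because $r_1,r_2>0$, the identity $(r_1+r_2)^2-2r_1 r_2 = r_1^2+r_2^2>0$ gives $\tfrac{r_1 r_2}{r_1+r_2}<\tfrac{r_1+r_2}{2}$ (equivalently, the harmonic mean lies strictly below the arithmetic mean), so the stability window for $\alpha$ is empty. To upgrade ``not stable'' to ``unstable,'' I would split on the sign of $\det J$. If $\alpha>\tfrac{r_1 r_2}{r_1+r_2}$ then $\det J<0$, the eigenvalues have opposite signs, and the origin is a saddle. If $\alpha\le\tfrac{r_1 r_2}{r_1+r_2}$, then $\alpha$ lies below $\tfrac{r_1+r_2}{2}$ as well, so $\operatorname{tr} J>0$ while $\det J\ge 0$; with real eigenvalues this forces at least one positive eigenvalue. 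In every case $\Lambda_1>0$, which gives instability.

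I do not anticipate a serious obstacle: the only subtlety is the degenerate boundary $\alpha=\tfrac{r_1 r_2}{r_1+r_2}$, where $\det J=0$ and the linearization acquires a zero eigenvalue. There the surviving eigenvalue equals $\operatorname{tr}J = (r_1^2+r_2^2)/(r_1+r_2)>0$, so the conclusion is unaffected and no center-manifold argument is needed. Equivalently, one can verify $\Lambda_1>0$ straight from Eq.~(\ref{norm_eigs}) by checking $\sqrt{4\alpha^2+(r_1-r_2)^2}>2\alpha-(r_1+r_2)$, squaring in the regime where the right-hand side is positive; this reduces to the same mean inequality.
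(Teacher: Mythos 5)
Your proposal is correct, and it takes a genuinely different route from the paper. The paper argues directly on the explicit eigenvalue $\Lambda_1$ from Eq.~(\ref{norm_eigs}): it notes $\Lambda_1$ is dominant, writes $\Lambda_1>0$ as $(r_1+r_2)+\sqrt{4\alpha^2+(r_1-r_2)^2}>2\alpha$, and squares to obtain $r_1^2+r_2^2+(r_1+r_2)\sqrt{4\alpha^2+(r_1-r_2)^2}>0$, which holds unconditionally when $r_1,r_2>0$. You instead bypass the square root via the trace--determinant criterion for the Jacobian $J$, reducing everything to the strict harmonic--arithmetic mean inequality $\tfrac{r_1r_2}{r_1+r_2}<\tfrac{r_1+r_2}{2}$, and then upgrade ``not asymptotically stable'' to ``unstable'' by a case split on $\operatorname{sign}(\det J)$. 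What your route buys: it avoids the sign bookkeeping that squaring requires (the paper's step is valid only because both sides $(r_1+r_2)+\sqrt{\cdot}\,$ and $2\alpha$ are nonnegative, a point it leaves implicit, and its ``$\Lambda_1>0$ implies'' phrasing runs the logic in the wrong direction --- the displayed inequality is an equivalence, which your version makes transparent); it handles the degenerate boundary $\det J=0$ explicitly, where the surviving eigenvalue is $\operatorname{tr}J=(r_1^2+r_2^2)/(r_1+r_2)>0$; and it surfaces the threshold $r_1r_2/(r_1+r_2)$, which is exactly the critical coupling $\alpha^*$ of Eq.~(\ref{Eq:minmig}) in the paper's subsequent mixed-sign proposition, so your argument unifies the two results. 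What the paper's route buys is brevity: it works straight from the eigenvalues already displayed and needs no saddle/degenerate case analysis. Your closing remark correctly observes that the direct verification of $\Lambda_1>0$ by squaring $\sqrt{4\alpha^2+(r_1-r_2)^2}>2\alpha-(r_1+r_2)$ in the regime where the right side is positive collapses to the same mean inequality, so the two proofs are ultimately the same computation organized differently.
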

\begin{proof} 
Upon inspection, $\Lambda_1$ is the dominant eigenvalue.  Assuming $r_1,r_2 > 0$, then $\Lambda_1>0 $ implies
\begin{equation}
(r_1+r_2) + \left(\sqrt{4\alpha^2 + (r_1-r_2)^2}\right) > 2 \alpha.
\end{equation}  
Squaring both sides and simplifying, we find
\begin{equation} \label{eq:lam1}
r_1^2+r_2^2+(r_1+r_2)\sqrt{4\alpha^2 + (r_1-r_2)^2} > 0, 
\end{equation}  
which is always true. Therefore, $(x,y)=(0,0)$ is unstable for all $\alpha  \in [0, \infty)$. 
\end{proof}
We can interpret this abstract result in the original system by concluding that for two isolated endemic populations, the amount of long-term migration is irrelevant to the persistence of the disease. The stability of the DFE cannot be changed by migration and intervention by vaccination is necessary for disease die out.   

Next, consider the case where we have sufficient vaccination so that one of the basic reproductive numbers is less than one, while the other is not. Again, we ask under what conditions the coupling parameter can stabilize the die out state.
\begin{thm}
Without loss of generality,  we assume $r_1>0$ and $r_2<0$. Case 1: If $-r_2<r_1$, then the fixed point $(x,y)=(0,0)$ is unstable for all $\alpha  \in [0, \infty)$. Case 2: If $-r_2>r_1$, then there exists some $\alpha^* \in [0,\infty)$ such that the fixed point $(0,0)$ is stable for all $\alpha >\alpha^*$. 
\end{thm}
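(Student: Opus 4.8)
The plan is to reduce the entire question to the sign of the dominant eigenvalue $\Lambda_1$. As already observed for Proposition~\ref{thm:twobig}, the discriminant $4\alpha^2+(r_1-r_2)^2$ is a sum of squares and hence nonnegative, so both eigenvalues in Eq.~(\ref{norm_eigs}) are real with $\Lambda_1\ge\Lambda_2$. Consequently $(0,0)$ is stable exactly when $\Lambda_1<0$ and unstable exactly when $\Lambda_1>0$, and I only need to track the sign change of $\Lambda_1$ as $\alpha$ ranges over $[0,\infty)$.

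First I would rewrite the instability condition $\Lambda_1>0$ as $\sqrt{4\alpha^2+(r_1-r_2)^2}>2\alpha-(r_1+r_2)$ and split on the sign of the right-hand side. If $2\alpha-(r_1+r_2)<0$ the inequality holds automatically, because the left-hand side is nonnegative. If $2\alpha-(r_1+r_2)\ge 0$, I would square both sides; the $4\alpha^2$ terms cancel and, after simplification, the condition collapses to the clean inequality $\alpha(r_1+r_2)>r_1r_2$. This reduction is the technical core of the proof, and I expect the careful bookkeeping of \emph{when} squaring is legitimate to be the main obstacle: one must verify the right-hand side is nonnegative before squaring and then confirm the boundary cases remain consistent.

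With this reduced inequality in hand, the two cases follow from the sign of $r_1+r_2$. In Case~1 ($-r_2<r_1$, so $r_1+r_2>0$) the right side $r_1r_2$ is negative while $\alpha(r_1+r_2)\ge 0$, so $\alpha(r_1+r_2)>r_1r_2$ holds for every $\alpha\ge 0$; together with the automatic sub-case this yields $\Lambda_1>0$ throughout $[0,\infty)$, so $(0,0)$ is unstable for all $\alpha$. In Case~2 ($-r_2>r_1$, so $r_1+r_2<0$) the automatic sub-case cannot arise, since $2\alpha\ge 0>r_1+r_2$, so we are always in the squared regime; dividing $\alpha(r_1+r_2)>r_1r_2$ by the negative quantity $r_1+r_2$ reverses the inequality and gives $\Lambda_1>0\iff\alpha<\alpha^*$, with $\alpha^*=r_1r_2/(r_1+r_2)$.

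Finally I would confirm that $\alpha^*$ is a genuine threshold in $[0,\infty)$: since $r_1r_2<0$ and $r_1+r_2<0$ in Case~2, the ratio is positive, and $\Lambda_1<0$ for every $\alpha>\alpha^*$, establishing stability. The division by a negative quantity, with its accompanying sign flip, is the one place where a careless step would produce the wrong threshold, so I would double-check the direction of that inequality and verify the sanity check $\Lambda_1=0$ at $\alpha=\alpha^*$ before concluding.
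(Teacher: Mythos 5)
Your proposal is correct and takes essentially the same approach as the paper: both arguments reduce stability to the sign of the dominant eigenvalue $\Lambda_1$, square the resulting inequality, and arrive at the same threshold $\alpha^* = r_1 r_2/(r_1+r_2)$. Your unified reduction to $\alpha(r_1+r_2) > r_1 r_2$ and the explicit verification of when squaring is legitimate are merely a slightly more careful bookkeeping of the paper's two-case computation.
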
 
\begin{proof} In both cases, assume $r_1>0$ and $r_2<0$. 

Case 1: For $-r_2<r_1$, $\Lambda_1>0$ reduces to the relationship in Eq.~(\ref{eq:lam1}). This is always true for $0<r_1+r_2$ and we conclude $(x,y)=(0,0)$ is unstable for all $\alpha  \in [0, \infty)$.

Case 2: For $-r_2>r_1$, $\Lambda_1<0$ reduces to the relationship 
\begin{equation} \label{eq:neglam1}
\left(\sqrt{4\alpha^2 + (r_1-r_2)^2}\right) < 2\alpha - (r_1+r_2). 
\end{equation}  
Because both sides are positive, we can square both sides to find
\begin{equation} \label{Eq:minmig}
\alpha >\frac{r_1r_2}{r_1+r_2}>0 .
\end{equation}    
There exists an $\alpha^*=r_1r_2/(r_1+r_2)$ for $\alpha^* \in [0,\infty)$. Therefore, for $\alpha >\alpha^*$, it follows that $\Lambda_1<0$ and $(x,y)=(0,0)$ is stable. 
\end{proof}
This result implies that in a system with one population supporting an endemic state, there is a minimum amount of migration necessary for the system to achieve stability of the DFE. In fact, we can interpret the requirement of   $-r_2>r_1$ as $y=0$ in the uncoupled system is more stable than $x=0$. Therefore, $y$ is sharing its extra stability with $x$. 

For completeness, we can also show that long-term migration cannot change the stability of a stable die out state for two isolated populations that have basic reproduction numbers less than one.
\begin{thm}
If $r_1$, $r_2 < 0$, then the fixed point $(x,y)=(0,0)$ is stable for all $\alpha  \in [0, \infty)$.
\end{thm}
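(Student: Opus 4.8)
The plan is to follow the same strategy used in the two preceding propositions. Since $\Lambda_1$ is the dominant eigenvalue (the $\sqrt{\cdot}$ enters with a plus sign), it suffices to show that $\Lambda_1 < 0$ for every $\alpha \in [0,\infty)$; this automatically gives $\Lambda_2 < \Lambda_1 < 0$ and hence asymptotic stability of $(x,y)=(0,0)$. So first I would reduce the problem to establishing the single inequality $\Lambda_1 < 0$.

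Second, using the expression for $\Lambda_1$ in Eq.~(\ref{norm_eigs}), the inequality $\Lambda_1 < 0$ rearranges to
\[
\sqrt{4\alpha^2 + (r_1-r_2)^2} < 2\alpha - (r_1+r_2).
\]
Before squaring, I would verify that both sides are nonnegative: the left side is a square root, and the right side is positive because $\alpha \ge 0$ while $r_1 + r_2 < 0$ by hypothesis, so $-(r_1+r_2) > 0$. This sign check is the one place requiring care, since it is exactly what makes the squaring step valid, and it is where the hypotheses $r_1, r_2 < 0$ enter.

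Third, squaring both sides and cancelling the common $4\alpha^2$ term leaves $(r_1-r_2)^2 < (r_1+r_2)^2 - 4\alpha(r_1+r_2)$. Using the identity $(r_1-r_2)^2 - (r_1+r_2)^2 = -4\,r_1 r_2$, this simplifies to the equivalent condition $r_1 r_2 > \alpha(r_1+r_2)$.

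Finally, I would observe that this condition holds automatically under the hypotheses: with $r_1, r_2 < 0$ the product $r_1 r_2$ is strictly positive, whereas $\alpha(r_1+r_2) \le 0$ since $\alpha \ge 0$ and $r_1+r_2 < 0$. Hence $r_1 r_2 > 0 \ge \alpha(r_1+r_2)$ for all $\alpha \in [0,\infty)$, including the endpoint $\alpha = 0$, so $\Lambda_1 < 0$ throughout and the die-out state is stable. I do not anticipate any genuine obstacle here; this result is essentially the mirror image of Case~1 of the previous proposition, and the only subtlety worth flagging is confirming the positivity of the right-hand side before squaring.
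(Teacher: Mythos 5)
Your proposal is correct and follows essentially the same route as the paper: reduce to $\Lambda_1<0$, rearrange to $\sqrt{4\alpha^2+(r_1-r_2)^2}<2\alpha-(r_1+r_2)$, verify both sides are positive, and square. The only cosmetic difference is that the paper divides through by $r_1+r_2<0$ to state the condition as $\alpha>r_1r_2/(r_1+r_2)$ with a negative right-hand side, whereas you keep it as $r_1r_2>\alpha(r_1+r_2)$ — an equivalent formulation that sidesteps the sign flip.
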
 
\begin{proof}  For $r_1$, $r_2 < 0$, $\Lambda_1<0$ reduces to Eq.~(\ref{eq:neglam1}). Because both sides are positive, we can square both sides to find
\begin{equation} 
\alpha >\frac{r_1r_2}{r_1+r_2} .
\end{equation}    
Since $\frac{r_1r_2}{r_1+r_2}<0$, $\Lambda_1<0$ for all $\alpha \in [0,\infty)$ and $(x,y)=(0,0)$ is stable. 
\end{proof}
We conclude that long-term migration has a positive effect on the stability of the DFE. The mixing in all classes diffuses the force of infection, making it harder for the disease to persist. In applications where migration is common, this effect might be significant.

\subsection{Normal Form for Mass Action Mixing}

To capture the effect of short-term migration for each subpopulation in Eq.~(\ref{eq:SIR_coupled}), we follow the construction of model for linear mixing by substituting $x$ and $y$ for $i$ in Eq.~(\ref{eq:SIS:I}). In this system, we use mass action coupling with the parameter $\sigma=c_3$ controlling the mass action mixing strength. Specifically, the term $\sigma(1-x)y$ represents the infectious person from $y$ coming into contact with a susceptible from $x$. The system take the form 
\begin{eqnarray}
\dot{x}&=& \left(\frac{r_1-\sigma}{1-\sigma}\right)x-x^2-\frac{\sigma}{1-\sigma}(1-x)y,\\
\dot{y}&=& \left(\frac{r_2-\sigma}{1-\sigma}\right)y-y^2-\frac{\sigma}{1-\sigma}(1-y)x.  \nonumber
\label{norm_form_masscoupling}
\end{eqnarray}
Again, the bifurcation parameter $r_{k}=(\hat{R}_{k}-1)/\hat{R}_{k}$ for $k=1,2$ from Eq.~(\ref{Eq:basicrep})  and time has been rescaled by $\beta(1-\sigma)$. 

Performing the linearization about the steady state $(x,y)=(0,0)$ yields two eigenvalues,
\begin{eqnarray}
\Lambda_{1}&=&\frac{1}{2(1-\sigma)} \, \left(r_1+r_2-2\sigma +
\sqrt {4\,{\sigma}^{2}+ (r_1-r_2)^2 }\right) 
, \\
\Lambda_{2}&=& \frac{1}{2(1-\sigma)} \, \left(r_1+r_2-2\sigma -
\sqrt {4\,{\sigma}^{2}+ (r_1-r_2)^2 }\right)  .
\end{eqnarray}
The eigenvalues for this system are of a similar form as those for linear migration in Eq.~(\ref{norm_eigs}), but multiplied by $1/(1-\sigma)$. Therefore, the mass action mixing can change the stability of the DFE as a function of basic reproduction numbers in the same settings as linear mixing. 

\section{Vaccination responses} \label{Sec:vaccination}

This section directly considers how the migration rates change the vaccination levels necessary to keep the DFE stable, which implies the occurrence of herd immunity. It explores whether neglecting the short- and long-term migration rates overestimates or underestimates the minimum vaccination rates necessary for disease fade-out.

We first restrict our attention to long-term migration only; i.e., letting $c_3 = 0$ in Eq.~(\ref{eq:SIR_coupled}). For $c_1>0$, this is equivalent to identifying the bifurcation points in $(v_1,v_2)$ when $\lambda_4=0$  in Eq.~(\ref{lambda4}).  Notice that if $c_1=0$, the subpopulations are isolated. The vaccination levels needed in each for the disease to die out is equivalent to solving for $v_1$ and $v_2$ in $\hat{R}_{1,2} \le 1$ from Eq.~(\ref{Eq:basicrep}). The constant solutions for $\hat{R}_{1,2} = 1$ using parameter values in Table~\ref{table:full_params} are shown in Figure~\ref{fig:c1vacc} as solid black lines, and the disease will die out in both populations in the top right quadrant. 

		\begin{figure}
		\begin{centering}
		\includegraphics[height=70mm]{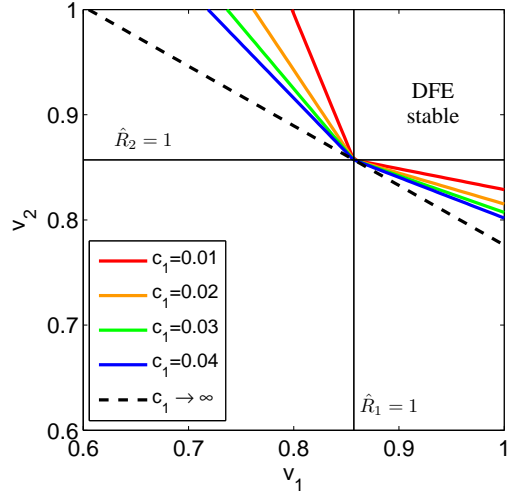}
		\caption{ {\label{fig:c1vacc} Boundary of the region of stability for the DFE as we vary $c_1$. The curves represent $\lambda_4=0$ using the parameter values in Table~\ref{table:full_params} and $c_3=0$.The vertical and horizontal black lines represent the vaccination rates necessary for a stable DFE in isolated populations ($c_1=0$). The dashed line represents the limiting curve as we increase $c_1$. }}
		\end{centering}
		\end{figure}

As we increase $c_1$, the boundary for the region of stability for the DFE spreads away from the $c_1=0$ case, increasing the die out region. The limit is a line, shown by the dotted black line in Figure~\ref{fig:c1vacc}. 
\begin{proposition}As $c_1\rightarrow \infty$, the bifurcation curve bounding the stable region approaches the line 
\begin{equation}\label{eq:vline}
v_2 =-\left(\frac{\mu_1}{\mu_2 \rho }\right)v_1+
\frac{(\mu_1+\mu_2 \rho) (\beta-\kappa)}{\beta \mu_2 \rho} -
\frac{(\mu_1+\mu_2 \rho)^2}{\beta \mu_2 \rho (1+\rho)} .
\end{equation}
\end{proposition}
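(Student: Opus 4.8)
The plan is to work directly from the bifurcation condition $\lambda_4=0$ with $c_3=0$ and extract its $c_1\to\infty$ limit as a curve in the $(v_1,v_2)$ plane. From Eq.~(\ref{lambda4}), $\lambda_4=0$ is equivalent to
\begin{equation*}
A := c_1+c_1\rho+\mu_1+\mu_2+2\kappa-\beta(\hat{S}_1+\hat{S}_2)=\sqrt{W},
\end{equation*}
where $W$ is taken from Eq.~(\ref{eq:dom_eig_full}) at $c_3=0$. Since this branch carries $A\ge 0$, I would square to get $A^2=W$. The key observation is that $W|_{c_3=0}$ is exactly $4c_1^2\rho$ plus the perfect square $B^2$ with $B=\beta(\hat{S}_1-\hat{S}_2)+c_1-c_1\rho-\mu_1+\mu_2$, so the condition becomes $A^2-B^2=4c_1^2\rho$. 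Factoring the difference of squares, the cross terms telescope: $A-B=2(c_1\rho+\mu_1+\kappa-\beta\hat{S}_1)$ and $A+B=2(c_1+\mu_2+\kappa-\beta\hat{S}_2)$, which yields the compact bifurcation condition
\begin{equation}\label{eq:factored}
\bigl(c_1\rho+\mu_1+\kappa-\beta\hat{S}_1\bigr)\bigl(c_1+\mu_2+\kappa-\beta\hat{S}_2\bigr)=c_1^2\rho.
\end{equation}

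Next I would expand Eq.~(\ref{eq:factored}), cancel the $c_1^2\rho$ appearing on both sides, divide through by $c_1$, and send $c_1\to\infty$. From the DFE formulas for $\hat{S}_1$ and $\hat{S}_2$, both steady states share the common limit
\begin{equation*}
\hat{S}_1,\hat{S}_2\longrightarrow \hat{S}_\infty:=\frac{(1-v_1)\mu_1+(1-v_2)\mu_2\rho}{\mu_1+\mu_2\rho},
\end{equation*}
while the leftover product term is $O(1)$ and so vanishes after division by $c_1$. The surviving relation is the single linear equation $\rho(\mu_2+\kappa-\beta\hat{S}_\infty)+(\mu_1+\kappa-\beta\hat{S}_\infty)=0$, which solves to $\hat{S}_\infty=(\mu_1+\rho\mu_2)/[(1+\rho)\beta]+\kappa/\beta$.

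Finally I would substitute the explicit form of $\hat{S}_\infty$ in terms of $v_1,v_2$, clear the denominator $\mu_1+\mu_2\rho$, and solve the resulting affine relation for $v_2$. Grouping the $v_1$ term gives slope $-\mu_1/(\mu_2\rho)$, and the constant reorganizes as $(\mu_1+\mu_2\rho)(\beta-\kappa)/(\beta\mu_2\rho)$ minus the quadratic term, so collecting terms should reproduce Eq.~(\ref{eq:vline}) exactly.

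The main obstacle is bookkeeping rather than conceptual: I must justify passing to the limit inside Eq.~(\ref{eq:factored}), confirming that the $O(1/c_1)$ corrections to $\hat{S}_1$ and $\hat{S}_2$ do not feed into the surviving linear relation once the whole equation is divided by $c_1$. I would also verify that the factored curve tracks the $\lambda_4$ root rather than $\lambda_3$, since squaring merges both: for large $c_1$ each factor on the left of Eq.~(\ref{eq:factored}) is positive (dominated by $c_1\rho$ and $c_1$ respectively), which forces $A>0$ and hence selects the $+\sqrt{W}$ branch as required.
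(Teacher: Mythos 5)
Your proposal is correct, and it takes exactly the route the paper's setup dictates for this proposition (which the paper states without an explicit proof): impose $\lambda_4=0$ at $c_3=0$ and let $c_1\to\infty$. The details all check out: $W|_{c_3=0}=4c_1^2\rho+B^2$ so that $A^2-B^2=4c_1^2\rho$, your factors $A-B=2\bigl(c_1\rho+\mu_1+\kappa-\beta\hat{S}_1\bigr)$ and $A+B=2\bigl(c_1+\mu_2+\kappa-\beta\hat{S}_2\bigr)$ are correct, the limiting relation $\beta\hat{S}_\infty=\kappa+(\mu_1+\rho\mu_2)/(1+\rho)$ combined with $\hat{S}_\infty=\bigl((1-v_1)\mu_1+(1-v_2)\mu_2\rho\bigr)/(\mu_1+\mu_2\rho)$ reproduces the stated line exactly, and your observation that both factors are positive for large $c_1$ (forcing $A>|B|\ge 0$) correctly selects the $\lambda_4$ root rather than $\lambda_3$ after squaring.
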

This line is decreasing in $v_1$, with the slope depending on a ratio of the birth rates and subpopulation sizes. Note that if we use the basic reproductive numbers for the isolated subpopulations, Eq.~(\ref{eq:vline}) is equivalent to
\begin{equation}\label{eq:Rvline}
v_2 =-\left(\frac{\mu_1}{\mu_2 \rho }\right)v_1+
\frac{\left(1+\frac{\mu_1}{\mu_2 \rho}\right)}
{  (1+\rho)} \left(\left(\frac{\hat{R}_1(0)-1}{\hat{R}_1(0)}\right)+\left(\frac{\hat{R}_2(0)-1}{\hat{R}_2(0)}\right)\right).
\end{equation}
As $\hat{R}_1(0)$ and/or $\hat{R}_2(0)$ increase, the $v_2$ intercept increases and shifts the line up vertically. Therefore, the attainable stable DFE region in $(v_1,v_2)$ space decreases, as expected.

Next, we consider only short-term migration, i.e. letting $c_1 = 0$. Similarly, for $0 \le c_3 \le 1$, this is equivalent to identifying the bifurcation points in $(v_1,v_2)$ when $\lambda_4=0$. We fix the parameters to the values in Table~\ref{table:full_params} and vary $c_3$ to see the changes to the boundary of the stable region. The solution for $c_3=0$ is shown as solid black lines in Figure~\ref{fig:c3vac}. Again the disease will die out in both populations in the top right quadrant. As we increase $c_3$, the boundary for the region of stability for the DFE smoothly pulls away from top right quadrant, increasing the die out region. 
\begin{proposition}The limit as we increase $c_3 \rightarrow 1$ is \[ v_2=1-{\frac { \left( \kappa+\mu_{{2}} \right)  \left( \kappa+\mu_{{1}}
 \right) }{ \left( 1-v_{{1}} \right) {\beta}^{2}}} . \] 
\end{proposition}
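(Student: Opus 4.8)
The plan is to exploit the fact that $\lambda_4$ is an explicit, continuous function of $c_3$ on $[0,1]$, so the limiting bifurcation curve is obtained by evaluating the condition $\lambda_4=0$ at the endpoint $c_3=1$. First I would specialize to short-term migration only by setting $c_1=0$ (hence also $c_2=c_1\rho=0$). The earlier remark then gives the DFE coordinates $\hat S_1=1-v_1$ and $\hat S_2=1-v_2$, and the expression for $W$ in Eq.~(\ref{eq:dom_eig_full}) collapses to $W=4\beta^2c_3^2(1-v_1)(1-v_2)+\big((1-c_3)\beta(\hat S_1-\hat S_2)+\mu_2-\mu_1\big)^2$, since the two $c_1$-dependent terms vanish.

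Next I would rewrite the bifurcation condition. With $c_1=0$, setting $\lambda_4=0$ in Eq.~(\ref{lambda4}) is equivalent to $\mu_1+\mu_2+2\kappa-(1-c_3)\beta(\hat S_1+\hat S_2)=\sqrt{W}$. Letting $c_3\to 1$ removes the factor $(1-c_3)\beta$ both on the left side and inside $W$, leaving the clean relation $\mu_1+\mu_2+2\kappa=\sqrt{4\beta^2(1-v_1)(1-v_2)+(\mu_2-\mu_1)^2}$. By continuity of $\lambda_4$ in $c_3$ up to $c_3=1$, this limiting equation is exactly the limit of the bifurcation curves.

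Because the left-hand side $\mu_1+\mu_2+2\kappa$ is strictly positive, I may square without introducing spurious roots, obtaining $(\mu_1+\mu_2+2\kappa)^2-(\mu_2-\mu_1)^2=4\beta^2(1-v_1)(1-v_2)$. The key algebraic step is the difference-of-squares factorization of the left-hand side, $(\mu_1+\mu_2+2\kappa)^2-(\mu_2-\mu_1)^2=(2\kappa+2\mu_1)(2\kappa+2\mu_2)=4(\kappa+\mu_1)(\kappa+\mu_2)$, which cancels the factors of $4$ and reduces the relation to $(\kappa+\mu_1)(\kappa+\mu_2)=\beta^2(1-v_1)(1-v_2)$. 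Solving for $v_2$ yields the stated line.

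This argument is essentially bookkeeping, so the only points requiring genuine care are (i) confirming that the square-root branch is taken with a nonnegative left-hand side so the squaring is reversible, which is guaranteed here since all rates are positive, and (ii) justifying that the pointwise limit of the zero sets coincides with the zero set at $c_3=1$, which follows from joint continuity of $\lambda_4$ in $(v_1,v_2,c_3)$ together with $\lambda_4$ remaining the dominant, non-degenerate root in this regime (by Proposition 2). I expect the difference-of-squares collapse to be the one nonroutine observation; everything else is direct substitution.
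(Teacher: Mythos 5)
Your proposal is correct, and it follows exactly the route the paper intends: the paper states this proposition without a printed proof, and the evident derivation is yours---set $c_1=0$ so that $\hat S_1=1-v_1$, $\hat S_2=1-v_2$, impose $\lambda_4=0$ from Eq.~(9) with $W$ reduced to $4\beta^2c_3^2(1-v_1)(1-v_2)+\left((1-c_3)\beta(\hat S_1-\hat S_2)+\mu_2-\mu_1\right)^2$, let $c_3\to 1$, and square (legitimately, since $\mu_1+\mu_2+2\kappa>0$). Your difference-of-squares step $(\mu_1+\mu_2+2\kappa)^2-(\mu_2-\mu_1)^2=4(\kappa+\mu_1)(\kappa+\mu_2)$ is the one substantive observation, and your care about the reversibility of the squaring and the continuity of the zero set in $c_3$ is sound, if more than the paper itself records.
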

Therefore, in both cases, underestimating the migration between populations causes an overestimation of the vaccination levels needed for herd immunity.

		\begin{figure}
		\begin{centering}
		\includegraphics[height=70mm]{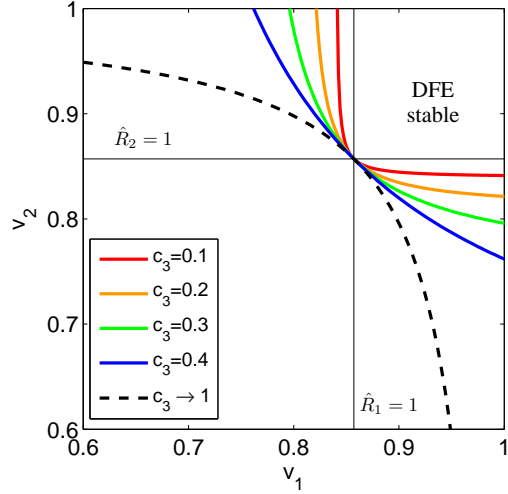}
		\caption{ {\label{fig:c3vac} Boundary of the region of stability for the DFE as we vary $c_3$. The curves represent $\lambda_4=0$ using the  parameter values in Table~\ref{table:full_params} and  $c_1=0$. The vertical and horizontal black lines represent the vaccination rates necessary for a stable DFE in isolated populations ($c_3=0$).}}
		\end{centering}
		\end{figure}

\section{Conclusions}\label{Sec:conclusion}

In this paper, we consider the effects of short- and long-term migration in coupled population models in the presence of vaccination. We study the interplay between the independent vaccination and migration rates across different populations.
We conclude that neglecting migration effects overestimates the vaccination levels necessary to achieve herd immunity. 

We have proven that if two isolated populations support an endemic state simultaneously, migration cannot change the stability of those endemic states. Analogously, this is also true for two populations with stable disease free equilibria. In contrast, migration can lead to disease die out in the mixed case. If a single population has a vaccination rate sufficient for herd immunity in isolation, low levels of migration from a population that is endemic will not necessarily make the disease endemic in both. In fact, increased levels of migration can lead to disease die out in both populations. However, migration rates are only physically realistic when they are small. 

Our results suggest more efficient vaccination strategies may be identified for groups of countries with significant migration between them. For example, instead of increasing the vaccination levels in a population that has already achieved herd immunity, sending vaccine to the less vaccinated neighboring country could have a greater impact on outbreak levels. The most efficient control algorithm would be to target the stable die out region as shown in Figure \ref{fig:c1vacc} or Figure \ref{fig:c3vac}. 

Conversely, populations for which vaccine delivery is difficult may benefit to a degree by vaccination of neighboring countries. More specifically, consider decreasing the migration rates to a country with a lower vaccination rate. We show in Figure \ref{fig:newvac} a policy where $N_2$, which has a vaccination rate $v_2=0.9$, decreases the long-term migration rate $c_1$ with $N_1$, which has a vaccination rate $v_1=0.7$. The short term migration rate is held constant at $c_3=0.1$. The decrease in number of new infections for $N_2$ is a small percentage of the increase in infections in $N_1$, and we conclude that the policy meant to help $N_2$ has unintended negative consequences for $N_1$. 

		\begin{figure}
		\begin{centering}
		\includegraphics[height=70mm]{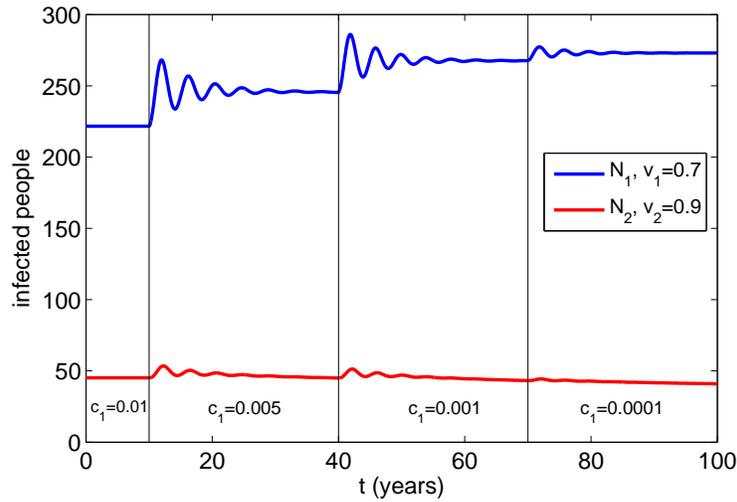}
		\caption{ \label{fig:newvac} Time series of infectives in both populations using the  parameter values in Table~\ref{table:full_params}, with $v_1=0.7$, $v_2=0.9$, and $c_3=0.1$. The value of $c_1$ is decreased to the constant noted in each window.}
		\end{centering}
		\end{figure}

In future directions, the model can be extended to include the effects of seasonality. A similar analysis of stable periodic behavior can reveal the sensitivity of synchronization to short-term and long-term migration. For example, the work of Schwartz \cite{Schwartz92} predicts new periodic orbits that can be excited by the mass action coupling in models with seasonal forcing. Specifically, these orbits exhibit long period outbreaks in small populations due to mass action coupling. When applying time dependent vaccination schedules, other parameters must be considered in addition to the average vaccination rates, such as pulse frequency and phase with respect to periodic application. Other techniques can be extended to migration models with vaccine control, such as prediction of future outbreaks as reported in Schwartz, et al.~\cite{Schwartz04}.

\section*{Acknowledgments}
We gratefully acknowledge support from the Office of Naval Research.  The authors were also supported by the National Institute of General Medical Sciences (Award No. R01GM090204). DATC holds a Career Award at the Scientific Interface from the Burroughs Wellcome Fund and received funding from the Bill and Melinda Gates Foundation Vaccine Modeling Initiative.  The content is solely the responsibility of the authors and does not necessarily represent the official views of the National Institute of General Medical Sciences or the National Institutes of Health. We would also like to thank Leah Shaw and Luis Mier-Y-Teran for their useful discussions. 


\begin{thebibliography}{10}
\expandafter\ifx\csname url\endcsname\relax
  \def\url#1{\texttt{#1}}\fi
\expandafter\ifx\csname urlprefix\endcsname\relax\def\urlprefix{URL }\fi
\expandafter\ifx\csname href\endcsname\relax
  \def\href#1#2{#2} \def\path#1{#1}\fi

\bibitem{Gay}
N.~J. Gay, W.~J. Edmunds, Developed countries could pay for {hepatitis B}
  vaccination in developing countries, British Medical Journal 316 (1998) 7142.

\bibitem{Aylward}
B.~Aylward, R.~Tangermann, The global polio eradication initiative: {Lessons}
  learned and prospects for success, Vaccine Suppl 4 (2011) D80--5.

\bibitem{WHO3}
{World Health Organization (WHO)}, {Measles Fact Sheet No. ~286}, {D}ecember
  2009, \url{http://www. who. int/mediacentre/factsheets/fs286/en//}.

\bibitem{blower}
D.~P. Wilson, J.~Kahn, S.~M. Blower, {Predicting the epidemiological impact of
  antiretroviral allocation strategies in KwaZulu-Natal: The effect of the
  urban-rural divide}, PNAS {103} ({2006}) {14228--14233}.

\bibitem{cummings}
D.~A.~T. Cummings, W.~J. Moss, K.~Long, C.~S. Wiysonge, T.~J. Muluh, B.~Kollo,
  E.~Nomo, N.~D. Wolfe, D.~S. Burke, {Improved measles surveillance in Cameroon
  reveals two major dynamic patterns of incidence}, International Journal of
  Infectious Diseases {10}~({2}) ({2006}) {148--155}.

\bibitem{ferguson_ADE}
N.~M. Ferguson, C.~A. Donnelly, R.~M. Anderson, Transmission dynamics and
  epidemiology of dengue: Insights from age-stratified sero-prevalence surveys,
  Phil. Trans. R. Soc. London, Ser. B 354 (1999) 757--768.

\bibitem{liebovitch}
L.~S. Liebovitch, I.~B. Schwartz, {Migration induced epidemics: Dynamics of
  flux-based multipatch models}, Physics Letters A {332} ({2004}) {256--267}.

\bibitem{Sattenspiel}
L.~Sattenspiel, K.~Dietz, A structured epidemic model incorporating geographic
  mobility among regions, Mathematical Biosciences {128} ({1995}) {71--91}.

\bibitem{Lloyd2004}
A.~L. Lloyd, V.~A.~A. Jansen, Spatiotemporal dynamics of epidemics: Synchrony
  in metapopulation models, Mathematical Biosciences {188} ({2004}) {1--16}.

\bibitem{Keeling02}
M.~J. Keeling, P.~Rohani, Estimating spatial coupling in epidemiological
  systems: A mechanistic approach, Ecology Letters 5~(1) (2002) 20--29.

\bibitem{Cui2006}
J.~Cui, Y.~Takeuchi, Y.~Saito, Spreading disease with transport-related
  infection, Journal of Theoretical Biology 239~(3) (2006) 376 -- 390.

\bibitem{Cui2007}
Y.~Takeuchi, X.~Liu, J.~Cui, Global dynamics of {SIS} models with
  transport-related infection, Journal of Mathematical Analysis and
  Applications 329~(2) (2007) 1460 -- 1471.

\bibitem{Liu2009}
J.~Liu, Y.~Zhou, Global stability of an {SIRS} epidemic model with
  transport-related infection, Chaos, Solitons \& Fractals 40~(1) (2009) 145 --
  158.

\bibitem{Hethcote}
H.~W. Hethcote, The mathematics of infectious diseases, SIAM Review 42 (2000)
  599--653.

\bibitem{Murray}
J.~D. Murray, Mathematical Biology, Springer, Berlin, 1989.

\bibitem{Schwartz92}
I.~B. Schwartz, Small amplitude, long period outbreaks in seasonally driven
  epidemics, J. Math. Biology 30 (1992) 473--491.

\bibitem{Schwartz04}
I.~B. Schwartz, L.~Billings, E.~M. Bollt, Dynamical epidemic suppression using
  stochastic prediction and control, Physical Review E 70 (2004) 046220.

\end{thebibliography}

\end{document}